\documentclass[conference]{IEEEtran}
\usepackage{amsmath,amsthm,amssymb,graphicx,graphics,epsfig,color}
\usepackage{lmodern}
\RequirePackage{fix-cm}

\allowdisplaybreaks

%\IEEEoverridecommandlockouts
%\IEEEpubid{\makebox[\columnwidth]{978-1-5386-9286-8/19/\$31.00~\copyright~2019~IEEE \hfill}
%\hspace{\columnsep}\makebox[\columnwidth]{ }}

\def\A{\mathcal{A}}
\def\P{\mathcal{P}_n}
\def\L{\mathcal{L}_D(P)}
\def\hL{\mathcal{L}_D}
\def\sfL{\mathsf{L}}
\def\le#1{l^{(#1)}}
\def\p1p{(p_1,p_2,\ldots,p_{n})}
\def\level{level property }
\def\maxsib{maximum-level sibling property }
\def\sib{sibling property }

\newtheorem{thm}{Theorem}
 \newtheorem{lemma}{Lemma}

\newtheorem{remark}{Remark}

\begin{document}

\title{Probability Mass Functions for which Sources have the Maximum Minimum Expected Length}

\author{\IEEEauthorblockN{Shivkumar~K.~Manickam}
\IEEEauthorblockA{Dept. of Electrical Communication Engineering\\
Indian Institute of Science, Bangalore, India\\
Email: shivkumar@iisc.ac.in}}

\maketitle

\begin{abstract}
%Let $\P$ be the set of all the probability mass functions (PMFs) of size $n$. 
 Let $\P$ be the set of all probability mass functions (PMFs) $(p_1,p_2,\ldots,p_n)$ that satisfy $p_i>0$ for $1\leq i \leq n$.
Define the minimum expected length function $\hL:\P\rightarrow \mathbb{R}$ such that $\hL(P)$ is the minimum expected length of a prefix code, formed out of an alphabet of size $D$, for the discrete memoryless source having $P$ as its source distribution. It is well-known that the function $\hL$ attains its maximum value at the uniform distribution. Further, when $n$ is of the form $D^m$, with $m$ being a positive integer, PMFs other than the uniform distribution at which $\hL$ attains its maximum value are known. However, a complete characterization of all such PMFs at which the minimum expected length function attains its maximum value has not been done so far. This is done in this paper. 
%
%It is well-known that among all the discrete memoryless sources with $n$ source symbols, that having the uniform distribution as its source distribution has the maximum value of the minimum expected length of a prefix code for the source. Further, when $n$ is a power of the size of the alphabet used for the prefix code, source distributions other than the uniform distribution for which the minimum expected length attains the maximum value are known. However, a complete characterization of all such source distributions 
%for which the minimum expected length has the maximum value
%%that have the maximum value of minimum expected length 
%has not been done so far. 
%This is done in this paper.
%The purpose of this paper is to carry this out.
\end{abstract}
\section{Introduction to the Problem}
One of the earliest problems considered in information theory is that of finding a prefix code with the minimum expected length 
%--- henceforth referred to as an \emph{optimal code} --- 
for a given discrete memoryless source. 
%In this paper, a study of a parameter associated with this problem is undertaken.
This paper addresses a question related to this problem.
%This paper studies a parameter associated with this problem. 
%This paper studies object of study is related to this problem. Let us first establish the terminology and notation.
% in which each of the source symbol occurs with some fixed probability.

Let us begin by establishing the basic terminology and notation used in this paper. 
A set of finite length strings of letters coming from a given finite alphabet is said to be a \emph{prefix code} if no string is a prefix of another.
% A \emph{prefix code} is a set of finite length strings of letters, which come from a given finite alphabet, in which no string is a prefix of another. 
 Let us use $\A$ to denote the finite alphabet and $D$ to denote its size. 
The members of a prefix code are called
% The members of a prefix code will be 
% are 
% referred to as 
 \emph{codewords}. 
 
 Consider a discrete source with $n$ symbols 
%$s_1,s_2,\ldots,s_n$ 
 where the $i$th symbol occurs with a probability $p_i$ ($\sum_i p_i =1$). The collection of probabilities $P=(p_1,p_2,\ldots,p_n)$ is a \emph{probability mass function} (PMF). 
 Let $X$ be a prefix code assigned to this source (from now onwards, we will simply say $X$ is a prefix code for the PMF $P$, leaving out any mention of the source) and let the $i$th source symbol be associated with a codeword of length $l_i$. The \emph{expected length} of the prefix code $X$ is $\sum_i p_il_i$. 
 A minimum expected length prefix code can be effectively obtained using the Huffman algorithm \cite{Huffman}. 
% The problem of finding a minimum expected length prefix code 
% can be effectively solved using the
%% has been solved by 
% Huffman algorithm \cite{Huffman}.
%  Huffman has given an algorithm to find a minimum expected length prefix code. 
  Henceforth, we will refer to a minimum expected length prefix code as an
%  which is also called an 
  \emph{optimal code}. 
 
 Let $\P$ be the set of all PMFs $(p_1,p_2,\ldots,p_n)$ that satisfy $p_i>0$ for $1\leq i \leq n$. Define the function $\hL:\P\rightarrow \mathbb{R}$ such that $\hL(P)$ is the expected length of an optimal code for 
% a source with 
 the PMF $P$.
%  associated with it. 
  Let us call this function the \emph{minimum expected length function}.
 
 Now, for a PMF $P\in \P$, the only known general way to determine $\hL(P)$ is by first determining an optimal code using the Huffman algorithm and then finding its expected length. There is neither any known analytical formula for $\L$ in terms of the probabilities of $P$ nor an alternate characterization of the function $\hL$, from which its values can be readily evaluated. However, some properties of this function are known.  Let us call a PMF at which $\hL$ attains its maximum value to be a \emph{point of maximum}. A result of Hwang \cite{Hwang} shows that the minimum expected length function is Schur-concave, and so attains its maximum value at the uniform distribution: $U_n=(1/n,1/n,\ldots,1/n)$.
%  (we will see more about this). 
Further, when $n$ is of the form $D^m$, a result of \cite{extra} gives other points of maximum:
% PMFs at which the minimum expected length function attains its maximum values
%: 
all $P\in \P$ in which the sum of the lowest $D$ probabilities 
%of $P$ 
is greater than or equal to the highest probability (we will see more about this).
%  Further, 
% when $n$ is of the form $D^m$, other PMFs at which the minimum expected length function attains its maximum value can be obtained using a result from \cite{extra}. 
 However, to the author's knowledge, a complete characterization of all the 
% PMFs at which 
%% $\hL$ 
% this function
% attains its maximum value 
 points of maximum
 has not been done so far. This is carried out in this paper. 
 
% Let us call a PMF at which $\hL$ attains its maximum value to be a \emph{point of maximum}. 

 We will be making use of a characterization of Huffman trees given by Gallager \cite{Gallager}. This is presented in the next section.
% , which will be presented in the next section. 
% This result  is a generalization of Gallager's characterization of Huffman trees for the binary case: $D=2$ \cite{Gallager}, and is presented in the next section.
 % using another result from \cite{extra}, other PMFs at which this function attains its maximum value, when $n$ is of the form $D^m$, 
% If a prefix code $X$ is assigned to this source with the $i$th source symbol being associated with a codeword of length $l_i$, then the expected length 
% 
% If a prefix code is assigned to this source such that the $i$th source symbol is associated with a codeword
%\section{A Property of Point of Maximum}

\section{Huffman Trees}
It is useful to visualize prefix codes in the form of trees  \cite[Chapter~5]{Cover_Thomas} (see Fig.~\ref{codetree}). 
For our purpose, we will find it convenient to give directions to the edges of a tree. When we will refer to a directed graph as a tree, we will do so in the sense that the undirected graph obtained by replacing each directed edge by an undirected one is a tree. 
Consider an infinite rooted directed tree in which all the edges are directed away from the root. Let each node of the tree have $D$ outgoing edges.
Let us denote this tree as $T_{\infty}$. Throughout this paper, whenever we talk of a tree we will mean a subtree of 
%(the undirected version of) 
$T_{\infty}$.
%Let $v$ and $v_1$  be two nodes of a tree. 
In a tree, if there is an edge from a node $v$ to node $v_1$, then $v$ is said to be the \emph{parent} of $v_1$ and $v_1$ is said to be a \emph{child} of $v$. A node $d$ is said to be a \emph{descendant} of a node $v$ if there is a path from $v$ to $d$. Nodes having the same parent are called \emph{siblings}. A \emph{sibling set} is the set of the children of an internal node of a tree. The \emph{level} of a node $v$ is the length of the path from the root to $v$.

\begin{figure}
  \centering
   \resizebox{5.8cm}{!}{\input{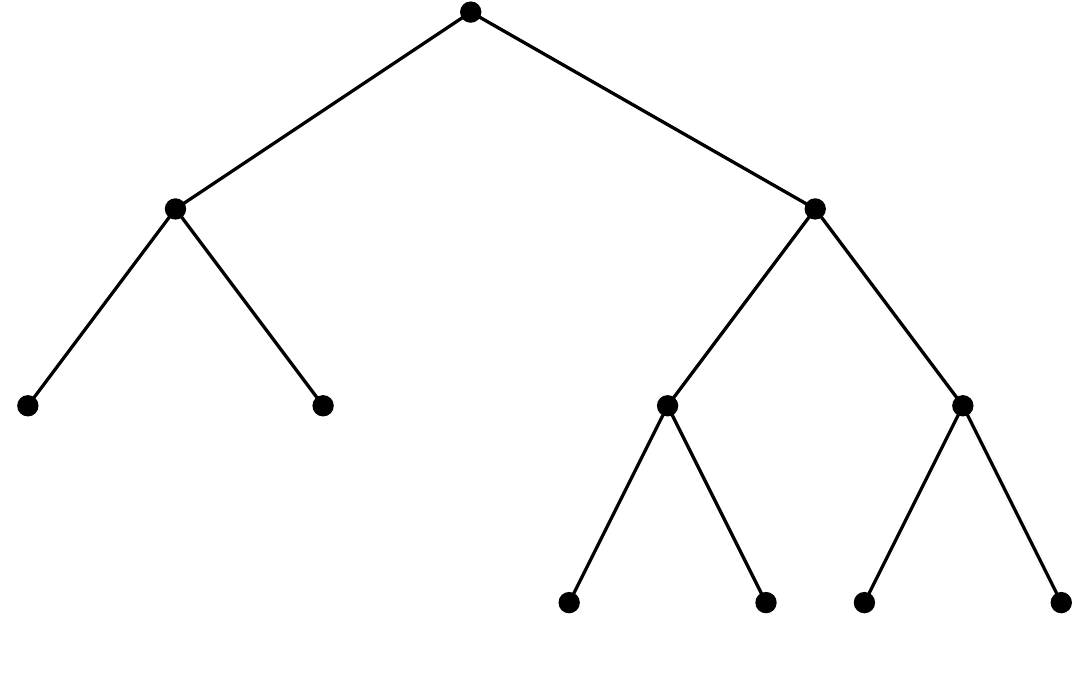_t}}
    \caption{Code tree for the prefix-free code $\{00,01,100,101,110,111 \}$.} 
\label{codetree}
\end{figure}

For every node of $T_{\infty}$, label each of the outgoing edges with a letter from $\A$ such that no two edges is associated with the same letter. Associate each node with the string formed by reading out the labels in the path from the root to that particular node. 
%Let us denote this labeled tree as $T_{\infty}$. 
To get the tree representation of a prefix code, mark all the nodes of the labeled tree $T_{\infty}$ that correspond to the codewords. Retain all the nodes in the paths from the root to the marked nodes (including the root and the marked nodes) and delete the remaining nodes. The resulting tree is called a \emph{code tree}.
Conversely, any tree $T$ can be considered to represent a class of prefix codes. To get this class,
consider all the possible ways the edges of the tree $T$ can be labeled using the letters of $\A$ (with no two outgoing edges from a node getting the same label). From each assignment of labels on the edges, we can get a prefix code by collecting all the strings along the leaves of $T$. Let us consider $T$ to represent the class of all such prefix codes obtained from all the different assignments of labels. 

Let $T$ be a tree with $n$ leaves and let $P\in \P$. If there is a 1-1 mapping between the probabilities of $P$ and the leaves of $T$, we say that $T$ is a tree associated with the PMF $P$. Using our relation between a tree and a class of prefix codes, we can see that a tree $T$ associated with a PMF $P$ defines a class of prefix codes for $P$. Observe that the expected length of all these prefix codes is the same, and we take this value as the \emph{expected length} of the tree $T$. An \emph{optimal tree} is that having the minimum expected length. The Huffman algorithm can be viewed as a one that constructs an optimal tree. 
It is briefly described below as we will refer to it to establish the Huffman tree characterization.
%As we will be referring to it to establish the Huffman tree characterization, it is briefly described below.
\subsection{The Huffman Algorithm} \label{subsec:Huffman_algorithm}
Let $P=\p1p$ be the given PMF for which we need an optimal code. Consider a forest $F_1$ containing $n$ isolated vertices. View each of the vertices as a tree and associate the probability $p_i$ with the $i$th vertex. Identify the integer $m$ such that $2\leq m \leq D$ and $D-1$ divides $n-m$. Choose any $m$ vertices $v_1,v_2,\ldots,v_m$ from $F_1$ having the lowest $m$ probabilities. Add a new vertex $r$ to $F_1$ and make it the parent of the vertices $v_1,v_2,\ldots,v_m$. Associate with the vertex $r$ the probability equaling the sum of the probabilities of its children. Finally, make the vertex $r$ the root of the tree to which it belongs.

Now suppose that the forests $F_1,F_2,\ldots,F_i$ are defined, with $F_i$ not being a tree. Define the forest $F_{i+1}$ as follows: choose any $D$ roots of the trees in $F_i$ with lowest $D$ probabilities. As in the previous case assign a parent to them, making the parent the root of the tree to which it belongs and assigning it the probability equaling the sum of the probabilities of its children. This is the forest $F_{i+1}$. From the way the first step of the algorithm was carried out, we will end with a forest that is in fact a tree associated with $P$; let us call it a \emph{Huffman tree} for the PMF $P$. 
%This Huffman tree is an optimal tree for $P$.
\subsection{Huffman Tree Characterization}
Let $T$ be a tree for a PMF $P$. As done during the Huffman algorithm, let us associate each node of $T$ with a probability in the following way:
% The Huffman tree characterization that we are now going to give requires that each node of $T$ be associated with a probability in the following: 
 a node gets the probability equal to the sum of the probabilities of its children. This assignment will result in the root node getting the probability 1. It can be shown that the expected length of $T$ is the sum of the probabilities of all its internal nodes (including the root node).
 
 Gallager \cite{Gallager} has given a necessary and sufficient condition for a tree associated with a PMF to be a \emph{Huffman tree}, i.e. a tree which can be generated by the application of Huffman algorithm on $P$. For our purpose, 
% it is convenient to use  
 we will need 
 a more descriptive version of this condition
% , and this 
 which
 is stated in the following theorem. A part of its proof follows the arguments presented in \cite{Gallager}.
%   Theorem \ref{thm:sibling_property}. To make the equivalence of this descriptive version and the original version of Gallager clear, a proof of Theorem \ref{thm:sibling_property} is given. 
% 
% We will now give a necessary and sufficient condition for a tree associated with a PMF to be a \emph{Huffman tree}, i.e. a tree which can be generated by the application of Huffman algorithm on $P$. This result is from \cite{Gallager}. For the purpose of this paper, the necessary and sufficient condition is stated in a more descriptive manner when compared to that occurring in \cite{Gallager}.
%  This result is a generalization of Gallager's Huffman tree characterization which was given for the binary case \cite{Gallager}.
 
 \begin{thm}\label{thm:sibling_property}
 Let $T$ be a tree associated with a PMF $P \in \P$. It is a Huffman tree for $P$  iff the following conditions hold:\\
 (P1) The probability of a lower level node is greater than or equal to that of a higher level node.\\
 (P2) Let $m$ be such that $D-1$ divides $n-m$ with $2 \leq m \leq D$. The tree $T$ contains a sibling set with $m$ nodes at its maximum level  which has the lowest $m$ probabilities of $P$. All the other sibling sets of $T$ have exactly $D$ nodes.\\
 (P3) The nodes at any level of $T$ can be listed in such a way that their probabilities are in a non-decreasing order, and the siblings come next to each other in the listing.
 \end{thm}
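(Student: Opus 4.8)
The plan is to prove Theorem~\ref{thm:sibling_property} by establishing the two directions separately, with the forward direction (Huffman tree $\Rightarrow$ P1, P2, P3) handled by tracking invariants through the Huffman algorithm, and the reverse direction (P1, P2, P3 $\Rightarrow$ Huffman tree) by an induction that essentially ``runs the algorithm backwards.''

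For the forward direction, I would argue inductively on the steps $F_1, F_2, \ldots$ of the algorithm, but it is cleaner to think in terms of the order in which internal nodes are created: the Huffman algorithm creates internal nodes in non-decreasing order of their associated probabilities. From this single fact, P1 falls out quickly: if a node $v$ at a lower level had strictly smaller probability than a node $w$ at a higher level, then when the algorithm merged the siblings of $v$ (forming $v$'s parent, or $v$ itself if $v$ is internal — here one must be slightly careful and phrase things in terms of when a node becomes ``unavailable'' for further merging, i.e., acquires a parent), the node $w$ or an ancestor of $w$ with probability $\le p(w) \le$ something would still have been available and cheaper, contradicting the greedy choice. P2 is essentially the definition of the first step of the algorithm (the $m$-node bottom sibling set) together with the fact that every subsequent step merges exactly $D$ roots. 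P3 is the ``sibling property'' in Gallager's original formulation: listing nodes by creation order (ties broken consistently) puts siblings adjacently because siblings are created simultaneously as a block, and the creation order refines the probability order, so restricting the global creation-order list to a fixed level gives a non-decreasing-by-probability listing with siblings contiguous.

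For the reverse direction, suppose $T$ satisfies P1, P2, P3. I want to show $T$ can be produced by the Huffman algorithm on $P$. The natural approach is induction on $n$ (the number of leaves): use P2 to locate the bottom sibling set $S$ of size $m$ carrying the $m$ smallest probabilities; P1 guarantees these really are among the smallest, and P3 guarantees the Huffman algorithm is \emph{permitted} to pick exactly this set at its first step (ties among equal probabilities are broken by the ``any $m$ vertices with lowest probabilities'' clause, and P3 certifies that $S$ is a valid such choice — this is the delicate point). Then contract $S$ to its parent, obtaining a tree $T'$ with $n - m + 1$ leaves for the PMF $P'$ in which the $m$ smallest probabilities are replaced by their sum. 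One checks that $T'$ still satisfies P1, P2, P3 for $P'$ (now with the ``$m$'' of the first step equal to $D$, consistent with $D-1 \mid (n-m+1) - D = n - m + 1 - D$), apply the induction hypothesis, and prepend the first merge step. The base case is $n \le D$, where the tree is a single sibling set and there is nothing to check beyond P2.

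\textbf{The main obstacle} I anticipate is the careful bookkeeping around ties in the probabilities, which is exactly where P3 does its work and where Gallager's condition is subtler than it first appears. When several nodes share the same probability, the Huffman algorithm has freedom in which ones to merge, and a given tree $T$ is a Huffman tree iff \emph{some} valid sequence of tie-breaking choices reproduces it. Conditions P1 and P2 alone are not enough — one can build trees satisfying P1 and P2 that are not Huffman trees — so the proof of the reverse direction must use P3 in an essential way to show the algorithm can be ``steered'' to produce $T$: concretely, one must verify that the set $S$ from P2, even when its elements tie in probability with non-siblings elsewhere in the tree, forms an admissible lowest-$m$ selection, and that after contraction the listing guaranteed by P3 descends to a valid listing for $T'$. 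Making this induction close cleanly — in particular checking that contracting $S$ preserves P3 at every level, not just the bottom one — is the technical heart of the argument. The forward direction, by contrast, I expect to be routine once the ``creation order refines probability order'' invariant is stated correctly.
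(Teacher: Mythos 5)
Your reverse direction and your treatment of (P2) and (P3) in the forward direction coincide with the paper's own proof: the paper likewise locates the bottom sibling set guaranteed by (P2), uses (P1) and (P3) to certify that it is an admissible lowest-$m$ (then lowest-$D$) greedy choice at every stage, and contracts and repeats (phrased there as building the Huffman forest in parallel through a bijection $\psi$ rather than as an induction on $n$, but it is the same argument); and it derives (P3), as you do, from the fact that the merged sets $S_1,S_2,\ldots$ produced by the algorithm satisfy $\max S_i \le \min S_{i+1}$.

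The genuine gap is in your argument for (P1). You claim that if $v$ at a lower level has $p(v)<p(w)$ for some $w$ at a higher level, then when the algorithm merges $v$'s sibling set, ``$w$ or an ancestor of $w$ with probability $\le p(w)$'' would still be available and cheaper, contradicting the greedy choice. Two things go wrong. First, ancestors of $w$ have probability $\ge p(w)$, not $\le p(w)$. Second, and more importantly, no greedy contradiction arises at that step: when $v$'s sibling set $S$ is merged, the unique node on $w$'s root path that is currently an available root is not a member of $S$ (it is either an ancestor of $w$, with probability $\ge p(w)>p(v)$, or $w$ itself or a descendant, which lies at a level strictly deeper than $v$ and so cannot be $v$'s sibling), and the greedy rule only forces an unchosen root to satisfy $p \ge \max S \ge p(v)$ --- which is perfectly consistent with the hypothesis $p(w)>p(v)$. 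The greedy rule is violated only when something \emph{cheaper} than a chosen node is passed over, and your hypothesis supplies only more expensive bystanders. The ordering $\max S_i\le\min S_{i+1}$ yields $p(v)\le p(w)$ only if $v$'s merge step precedes $w$'s, and a difference in levels does not by itself order the merge steps that way across different branches of the tree. The paper sidesteps all of this by proving (P1) with the classical exchange argument: if (P1) failed, the subtrees rooted at $v$ and $w$ would be disjoint (since $p(w)>p(v)$ rules out $w$ being a descendant of $v$), swapping them changes the expected length by $(l_1-l_2)(p_2-p_1)<0$, contradicting the optimality of the Huffman tree. You should either adopt that exchange argument (accepting optimality of Huffman trees as known, as the paper does) or replace your local greedy claim with a genuine induction on the reduction steps; as written, this step does not close.
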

 
 \begin{proof}
 Let us first take $T$ to be a Huffman tree for $P$. 
% Since it is an optimal tree for $P$, the condition (P1) is true. 
 Now suppose that the condition (P1) is not satisfied for two nodes $v_1$ and $v_2$ at levels $l_1$ and $l_2$ respectively, with $l_1<l_2$. Let $v_1$ have the probability $p_1$ and $v_2$ have the probability $p_2$. Let $ST(v_1)$ ($ST(v_2)$) be the subtree of $T$
 that has $v_1$ ($v_2$) as its root and contains
% rooted at $v_1$ ($v_2$) and containing 
  all the descendants of $v_1$ ($v_2$). Let $l_{11},l_{12},\ldots,l_{1i}$ be the levels of the leaves of 
 $ST(v_1)$, with levels calculated from the node $v_1$, 
% this subtree, 
 and $p_{11},p_{12},\ldots,p_{1i}$ be their respective probabilities (with $\sum_k p_{1k}=p_1$). Similarly, let $l_{21},l_{22},\ldots,l_{2j}$ be the levels of the leaves of 
 $ST(v_2)$, with levels calculated from the node $v_2$, 
% the subtree of $T$ rooted at 
% when $v_2$ 
%% is taken as the root, 
% and containing all its descendants. 
and $p_{21},p_{22},\ldots,p_{2j}$ be their respective probabilities 
% Let $p_{2j}$ be the probability of the node at level $l_{2j}$,
 (with $\sum_k p_{2k}=p_2$). Since $p_2>p_1$, 
 $v_2$ cannot be a descendant of $v_1$.
% $v_1$ cannot be an ancestor of $v_2$. 
 Create a new tree associated with $P$ as follows without changing the association between the probabilities of $P$ and the leaves of $T$: Let $u_1$ and $u_2$ be the parents of $v_1$ and $v_2$ respectively. Delete the edges between $u_1$ and $v_1$, and $u_2$ and $v_2$. Construct new edges from $u_1$ to $v_2$, and  $u_2$ to $v_1$. In other words, we are interchanging the parents of $v_1$ and $v_2$. It is clear that the 
% (undirected version of the) 
 resulting  graph is a tree. Let us call this tree $T'$.
%  which is also associated with $P$. 
  Let $L(T)$ and $L(T')$ denote the expected lengths of $T$ and $T'$ respectively.
% Interchange $ST(v_1)$ and $ST(v_2)$. Let $T'$ be the new tree. Recall that $L(T)$ is the expected length of $T$.
  Now, there exists a $\lambda \in \mathbb{R}$ such that
%  \begin{align}
%  L(T) &= \lambda + \sum_{k=1}^{i}(l_1 + l_{1k})p_{1k} + \sum_{k=1}^{j}(l_2 + l_{2k})p_{2k}\\ \notag
%  &= \lambda+ l_1p_1 + l_2p_2 + \sum_{k=1}^{i}l_{1k}p_{1k} + \sum_{k=1}^{j} l_{2k}p_{2k}, \label{xxx}
%  \end{align}
%  and
%  \begin{align}
%  L(T') &= \lambda + \sum_{k=1}^{i}(l_1 + l_{2k})p_{2k} + \sum_{k=1}^{j}(l_2 + l_{1k})p_{1k} \notag\\
%  &= \lambda + l_1p_2 + l_2p_1 + \sum_{k=1}^{i}l_{1k}p_{1k} + \sum_{k=1}^{j}l_{2k}p_{2k}. \label{yyy}
%  \end{align}
\begin{align}
  L(T) &= \lambda + \sum_{k=1}^{i}(l_1 + l_{1k})p_{1k} + \sum_{k=1}^{j}(l_2 + l_{2k})p_{2k} \notag\\ 
  &= \lambda+ l_1p_1 + l_2p_2 + \sum_{k=1}^{i}l_{1k}p_{1k} + \sum_{k=1}^{j} l_{2k}p_{2k}, \label{eq:opt1}
  \end{align}
  and
  \begin{align}
  L(T') &= \lambda + \sum_{k=1}^{i}(l_1 + l_{2k})p_{2k} + \sum_{k=1}^{j}(l_2 + l_{1k})p_{1k} \notag\\
  &= \lambda + l_1p_2 + l_2p_1 + \sum_{k=1}^{i}l_{1k}p_{1k} + \sum_{k=1}^{j}l_{2k}p_{2k}. \label{eq:opt2}
  \end{align}
  Notice that $L(T')< L(T)$ which contradicts the optimality of $T$. Thus the condition (P1) is true.
 
 The condition (P2) follows from the way the Huffman algorithm is carried out and the condition (P1). 
 
 Let us now prove condition (P3). Let $F_1, F_2,\ldots, F_N$ be the forests obtained during the execution of the Huffman algorithm on PMF $P$ that yields the tree $T$ (see Section \ref{subsec:Huffman_algorithm}). Let $S_i$ ($1\leq i \leq N$) be the chosen root nodes in $F_i$ having the lowest $D$ probabilities (lowest $m$ probabilities when $i=1$). It can be seen that the $S_i$'s are precisely the sibling sets in $T$. Further, 
% it is also clear that 
 each of the probabilities in $S_i$ is less than or equal to each of that in $S_{i+1}$. Thus, it is possible to list all the nodes of $T$ in such a way that their probabilities are in a non-decreasing order, and the siblings come next to each other in the listing. As a result, (P3) also holds.

 Let us prove that these conditions are sufficient for $T$ to be a Huffman tree for $P$. We will do so by showing that a tree isomorphic to $T$ can be obtained by the application of the Huffman algorithm on $P$.
 
 Let $V$ and $V_L$ denote the nodes and leaf nodes of $T$ respectively. Let $U$  be a set with $|U|=|V|$. Define a bijection $\psi:V\rightarrow U$. For any $V'\subseteq V$, let the $\psi$-image of $V'$ be the set $\{\psi(v')\mid v' \in V'\}$. 
% be known as the $\psi$-image of $V'$. 
 Construct a forest $F$ exactly containing all the elements of the $\psi$-image of $V_L$ as isolated vertices (with each of them viewed as a tree with that vertex itself serving as its root). 
% Notice that the roots of the trees in the forest $F$ are precisely the leaves of the tree $T$. 
 For each node of $V_L$, assign its probability to its $\psi$-image in the forest. Let $S$ be the sibling set in $T$ having the $m$ lowest probabilities of $P$, and let $r$ be its parent. 
 Derive a new forest $F'$ from $F$
 by introducing $\psi(r)$ as a new vertex to $F$, and making it the parent of the vertices occurring in the $\psi$-image of $S$. Assign the probability of the node $r$ to $\psi(r)$.
% Call the resulting forest $F'$.
 Derive a new tree $T'$ from $T$ by deleting all the nodes of $S$ from $T$. It can be seen that the roots of the trees in $F'$ are precisely the leaf nodes of $T'$. 
 
 If $F'$ is not a tree, then perform the above mentioned step for $T'$ and $F'$ by taking $S$ to be a sibling set in $T'$ containing the lowest $D$ probabilities
%  among the all the probabilities of the nodes 
 of $T'$. Such a choice is possible as $T'$ also satisfies conditions (P1) and (P3). Note that all the nodes of $S$ are leaf nodes of $T'$.
% , for otherwise, we would have a node in $T'$ with probability strictly less than that of a  node in $S$. 
 The execution of the step will leave us with a tree and a graph both of which are derived in two steps from $T$ and $F$  respectively. The graph derived from $F$ will be a forest as the $\psi$-image of $S$ are root nodes in $F'$. 
%  The execution of the step will leave us with a tree and a forest both of which are derived in two steps from $T$ and $F$  respectively. The graph derived from $F$ will be a forest as the $\psi$-image of $S$ are root nodes in $F'$. 
 Keep repeating the above step and it can be seen that each execution of the step will yield a tree derived from $T$ and a forest derived from $F$ with the $\psi$-image of the leaves of the tree being exactly the root nodes of the trees of the forest. Continue doing so till the forest derived from $F$ becomes a tree, say $T^*$. It can be seen that these steps constitute the Huffman algorithm and so $T^*$ is a Huffman tree for $P$. By way of construction of $T^*$, it is clear that it is isomorphic to $T$ under the bijection $\psi$. Thus, $T$ is a Huffman tree for $P$. 
 \end{proof}
 \begin{remark} \label{rem:optimal}
Equations (\ref{eq:opt1}) and (\ref{eq:opt2}) can be used to show that if a tree $T$ is optimal, then it should at least satisfy the condition (P1).
 \end{remark}
 
 Let us call the conditions (P1)--(P3) as \textit{Huffman tree properties}. We will be referring to each of them  as follows: the condition (P1) will be called the \emph{level property}, condition (P2) will be called the \emph{maximum-level sibling property} and condition (P3) will be called the \emph{sibling property}.
%  and we will refer to it throughout this paper. 
  These properties serve as a potent tool to approach questions related to Huffman trees.
% Let us refer to each of them as follows: condition (P1) as the \emph{level property}, condition (P2) as the maximum 
 
\section{Points of Maximum}
Let the sequence of the codeword lengths of a prefix code $X$ arranged in a non-decreasing order be called the \emph{length sequence} of $X$. Let an \emph{optimal length sequence} for a PMF $P$  be the length sequence of an optimal code for $P$. 
Let us follow the convention of always writing out the probabilities of a PMF $P=(p_1,p_2,\ldots,p_n)$ in a non-increasing order, i.e., $p_1\geq p_2\geq \ldots \geq p_n$.

%Let the \emph{length sequence} of a prefix code refer to a sequence of its codeword lengths.
%Let us refer to the codeword lengths of a prefix code as its \emph{length sequence}an optimal code for a PMF $P$ to be an \emph{optimal length sequence} for $P$.
Let us now take up the following problem: what is an optimal length sequence for a point of maximum? 
%Let us start with the uniform distribution.
The length sequence of the Huffman code for the uniform distribution --- which is a point of maximum --- is known (see, for e.g., \cite{extra}). To emphasize the point that problems related to Huffman trees can be effectively handled using the Huffman tree properties, we will now use these 
%properties 
to determine the length sequence of the Huffman code for $U_n$.

%For the sake of completeness and to show the effectiveness of our Huffman tree characterization
%
%
%For the uniform distribution $U_n$ (recall that $U_n$ is a point of maximum), the length sequence of its Huffman code is known \cite{extra}. In \cite{extra} this optimal length sequence  has been determined by an implicit use of Huffman tree properties. We will now see that these properties are sufficient to determine the Huffman tree structure for the uniform distribution

Consider the tree having $D^m$ leaves at level $m$, for some $m\in \mathbb{N}$. Let us denote it as $T_U(m)$. We have the following result which has appeared in  \cite{extra}.

\begin{lemma} \label{lemma:fat_tree}
Let $n=D^m$, for some $m\in \mathbb{N}$. A PMF $P\in \P$ with $P=\p1p$ has $T_U(m)$ as a Huffman tree iff the sum of the lowest $D$ probabilities of $P$ is greater than or equal to its highest probability, i.e., iff  $$\sum_{i=n-D+1}^{n}p_i \geq p_1.$$
\end{lemma}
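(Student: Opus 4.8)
The plan is to prove both directions using the Huffman tree characterization (Theorem~\ref{thm:sibling_property}), since $T_U(m)$ is a very rigid object: it is the complete $D$-ary tree of depth $m$, all $n = D^m$ leaves sit at level $m$, and every internal node has exactly $D$ children. Note first that for $n = D^m$ we have $n - D = D^m - D = (D-1)(D^{m-1} + \cdots + D)$, which is divisible by $D-1$, so the integer $m$ in condition (P2) is $D$; thus (P2) forces \emph{every} sibling set (including the maximum-level one) to have exactly $D$ nodes, which is exactly the combinatorial shape of $T_U(m)$.

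For the forward direction, suppose $T_U(m)$ is a Huffman tree for $P$. Apply the \level (P1) to the node of level $1$ whose subtree contains the leaf carrying $p_1$ and to any maximum-level internal node, i.e. a node of level $m-1$; such a level-$(m-1)$ node is the parent of a sibling set of $D$ leaves, and under the probability-assignment-to-nodes rule its probability is a sum of $D$ leaf probabilities, each of which is at least $p_n$, so in particular this node can be chosen so that its probability equals $\sum_{i=n-D+1}^{n} p_i$ — indeed by the \sib (P3) the $D$ smallest leaf probabilities $p_{n-D+1},\dots,p_n$ form a sibling set. Meanwhile the level-$1$ node containing $p_1$ has probability at least $p_1$. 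Since level $1 <$ level $m-1$ (for $m \geq 2$; the case $m=1$ where $n=D$ is trivial since then $T_U(1)$ is the unique tree and is Huffman for every PMF, and the inequality $\sum_{i=1}^{D} p_i = 1 \geq p_1$ holds automatically), (P1) gives that the level-$1$ node's probability is $\geq$ the level-$(m-1)$ node's probability, so $p_1 \leq \sum_{i=n-D+1}^{n} p_i$ after also using that the level-$1$ node's probability is itself $\geq p_1$ — one has to be slightly careful here and instead compare directly: the level-$(m-1)$ node with probability $\sum_{i=n-D+1}^{n}p_i$ must have probability $\le$ the probability of the level-$1$ node containing $p_1$, and that latter probability is a sum of leaf probabilities all of which are $\le p_1$ only in a bounded way, so the cleanest route is: the leaf $p_1$ itself is at level $m$, its probability $p_1$ must be $\le$ the probability of any lower-level node, in particular $\le$ the probability of the level-$(m-1)$ parent of the $D$ smallest leaves, which is $\sum_{i=n-D+1}^{n} p_i$. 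That is the desired inequality.

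For the converse, assume $\sum_{i=n-D+1}^{n} p_i \geq p_1$ and verify (P1)--(P3) for $T = T_U(m)$ with the node-probability assignment. Property (P2) holds by the shape remark above (every sibling set has $D$ nodes, and one of the maximum-level sibling sets can be taken to consist of the $D$ smallest leaf probabilities by relabeling the bijection between $P$ and the leaves; since $m=D$ here, the "lowest $m$ = lowest $D$ probabilities at maximum level" requirement is met). Property (P3) holds because within each level we may order the leaves/nodes by probability keeping siblings adjacent — this is automatic at level $m$ by choosing the leaf assignment suitably, and propagates upward since the probability of an internal node is the sum over its block of $D$ children. The real content is (P1): I must show that for every pair consisting of a level-$k$ node and a level-$(k+1)$ node, the former has probability $\ge$ the latter. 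Since all leaves are at level $m$, a level-$k$ node's probability is the sum of the $D^{m-k}$ leaf probabilities in its subtree. The worst case pairs a level-$k$ node whose subtree holds the \emph{smallest} $D^{m-k}$ probabilities against a level-$(k+1)$ node whose subtree holds the \emph{largest} $D^{m-k-1}$ probabilities; I need $\sum(\text{smallest } D^{m-k}) \ge \sum(\text{largest } D^{m-k-1})$.

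The main obstacle is establishing this last chain of inequalities cleanly. The key observation is that the hypothesis $\sum_{i=n-D+1}^{n} p_i \ge p_1$ says "sum of the $D$ smallest $\ge$ the single largest," and one bootstraps it: any block of $D^{m-k}$ probabilities can be partitioned into $D^{m-k-1}$ groups of $D$, and a block of $D^{m-k-1}$ probabilities has $D^{m-k-1}$ members; pairing each group of $D$ from the small block with one member of the large block, it suffices that each such group-of-$D$ sum dominates the corresponding single probability, which follows because \emph{any} sum of $D$ probabilities of $P$ is at least $\sum_{i=n-D+1}^{n} p_i \ge p_1 \ge$ \emph{any} single probability. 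So in fact a sum of any $D$ probabilities dominates any single probability, hence a sum of $D^{m-k}$ probabilities (grouped into $D^{m-k-1}$ $D$-sums) dominates a sum of $D^{m-k-1}$ single probabilities, term by term. This gives (P1) and completes the proof. I would also remark that this argument recovers, as the case of equality-permitting extremes, the statement that the length sequence of the Huffman code for such $P$ is the all-$m$'s sequence, matching the known result for $U_n$.
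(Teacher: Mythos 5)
Your proof is correct, and the forward direction coincides with the paper's: both identify the parent of the $D$ smallest leaves as a level-$(m-1)$ node of probability $\sum_{i=n-D+1}^{n}p_i$ and compare it, via the level property (P1), with the level-$m$ leaf carrying $p_1$ (with the $m=1$ case trivial). For the converse, both you and the paper verify the conditions of Theorem~\ref{thm:sibling_property} after sorting the leaf probabilities, and (P2), (P3) are handled the same way; the genuine difference is in how (P1) is established. The paper argues by induction on $m$: it forms the PMF $Q$ of level-$(m-1)$ node probabilities, shows $Q$ again satisfies the hypothesis of the lemma (its $D$ smallest entries sum to at least its largest entry), invokes the induction hypothesis for $T_U(m-1)$, and then only checks (P1) between levels $m-1$ and $m$. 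You instead verify (P1) directly for every pair of consecutive levels by partitioning the $D^{\,m-k}$ leaf probabilities under a level-$k$ node into $D^{\,m-k-1}$ groups of $D$, each of which sums to at least $\sum_{i=n-D+1}^{n}p_i\geq p_1$ and hence dominates, term by term, the $D^{\,m-k-1}$ leaf probabilities under any level-$(k+1)$ node. Your direct argument dispenses with the induction and isolates the single fact actually used (any $D$-sum of probabilities dominates any single probability); the paper's induction is slightly more economical to write because the lemma's hypothesis is literally reproduced one level up. One cosmetic point: the existence of a maximum-level sibling set consisting of the $D$ smallest probabilities is most directly condition (P2) rather than (P3), though with all sibling sets of size $D$ either condition yields it.
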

\begin{proof}
First suppose that $T_U(m)$ is a Huffman tree for $P$. 
The minimum probability of the nodes at level $m-1$ is 
$\sum_{i=n-D+1}^{n}p_i$ (maximum-level sibling property (P2)).
%$p_1+p_2+\cdots +p_D$. 
From the level property (P1), we have that $\sum_{i=n-D+1}^{n}p_i \geq p_1$.

Now suppose that $\sum_{i=n-D+1}^{n}p_i \geq p_1$. Assign the probabilities of $P$ to the leaves of $T_U(m)$ in such a way that the probabilities of the leaves from left to right are in a non-decreasing order. We will now show $T_U(m)$ with this assignment of probabilities satisfies the Huffman tree properties. We will do it by induction on $m$. The statement is clearly true for $m=1$. Let us assume that for some positive integer $k$, the statement is true for $m=k$. Let us now take $m=k+1$. Let us denote the PMF formed out of the probabilities at level $k$ of $T_U(k+1)$ to be $Q$. Observe that the probabilities of $Q$ are arranged in a non-decreasing order from left to right at the $k$th level.
Let $q_1,q_2,\ldots,q_D$ be the lowest $D$ probabilities of $Q$ taken in a non-decreasing order. The highest probability of $Q$, say $q_{h}$, is given by $p_{1}+p_{2}+\cdots + p_D$. Since $q_1=\sum_{i=n-D+1}^{n}p_i$, we have the following inequalities:
\begin{equation*}
q_D\geq q_{D-1}\geq \ldots \geq q_1 \geq p_1 \geq p_{2}\geq \ldots \geq  p_{D}.
\end{equation*}
 Thus, we have that $\sum_{i=1}^{D}q_i\geq q_h$. By the induction hypothesis, we have that the tree $T_U(k)$, with its nodes retaining its probabilities as in $T_U(k+1)$, satisfies the Huffman tree properties. 
 Now take a node at level $k$ in the tree $T_U(k+1)$ with probability $q_i$ and a node at level $k+1$ with probability $p_j$. We have that $q_i \geq q_1 \geq p_1 \geq p_j$. Thus, the \level (P1) is satisfied in $T_U(k+1)$. Finally, by the way the probabilities were assigned to the leaves of $T_U(k+1)$, the \maxsib (P2) and the \sib (P3) are also satisfied at level $k+1$  of $T_U(k+1)$. Thus, from Theorem \ref{thm:sibling_property}, the tree $T_U(k+1)$ is a Huffman tree for $P$. 
% Hence, $T_U(m)$ is a Huffman tree for $P$ for all positive integers $m$.
Hence the lemma is proved.
\end{proof}

\begin{remark}
For the case $n=D^m$ ($m \in \mathbb{N}$), Lemma \ref{lemma:fat_tree} describes points of maximum, other than the uniform distribution $U_n$, for the minimum expected length function $\hL$. 
For a PMF $P\in \P$, satisfying the condition of Lemma \ref{lemma:fat_tree}, i.e., $\sum_{i=n-D+1}^{n}p_i \geq p_1$, we have from this lemma that $\hL(P)=m$. Since $U_n$ also satisfies this condition of the lemma, and since it is a point of maximum, we have that $P$ is also a point of maximum. Thus, all the PMFs satisfying the condition of Lemma \ref{lemma:fat_tree} are points of maximum.
\end{remark}

Now, let $T_U$ represent the tree $T_U(m)$  when $n=D^m$ (a power of $D$) and when $D^m < n< D^{m+1}$ ($n$ is not a power of $D$) let $T_U$ denote the tree (see Fig.~\ref{teeu}) in which
%
%Let us follow the convention that when $n=D^m$ (a power of $D$), the tree $T_U$ is $T_U(m)$ and when $D^m < n< D^{m+1}$ ($n$ is not a power of $D$), the tree $T_U$ denotes the tree in which
\begin{itemize}
\item[i)] all the leaves are at levels $m$ and $m+1$,
\item[ii)] at level $m$, each of the internal nodes is to the right of any of the leaf nodes and
\item[iii)] all the internal nodes, except possibly the leftmost  internal node at level $m$, have $D$ children each. The leftmost internal node at level $m$ has at least 2 children.
\end{itemize}

\begin{figure}
  \centering
   \resizebox{7cm}{!}{\input{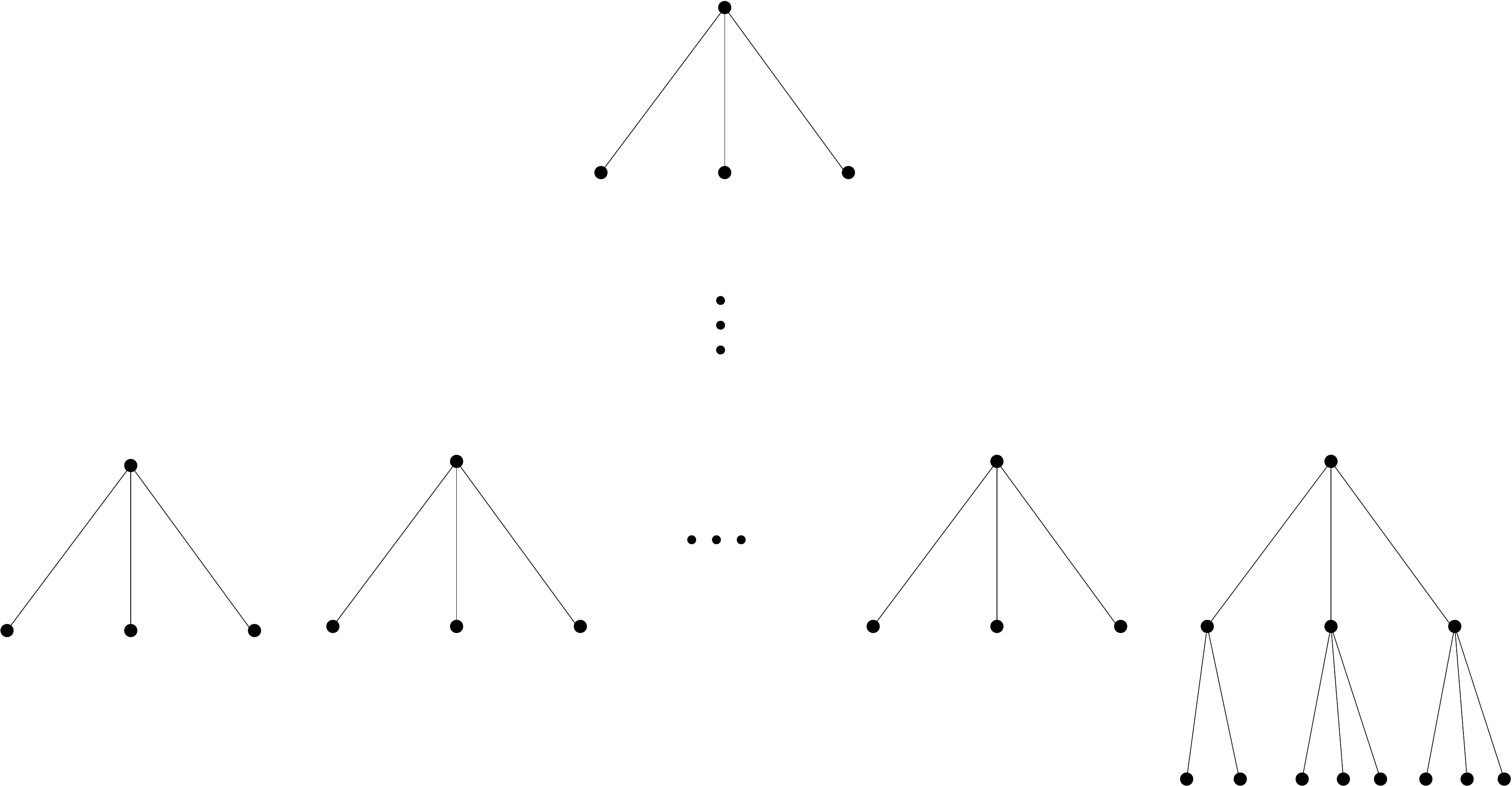_t}}
    \caption{An example for $T_U$ when $D=3$ and $n$ is not a power of $D$.} 
\label{teeu}
\end{figure}
It can be seen that these conditions uniquely define the tree $T_U$ when $n$ is not a power of $D$. When $D-1$ divides $n-1$, then all the internal nodes of $T_U$ will have $D$ children; and the number of internal nodes at level $m$ is $(n-D^m)/(D-1)$. When $D-1$ doesn't divide $n-1$, only one internal node will have $m$ children where $m$ is such that $2\leq m \leq D$ and $D-1$ divides $n-m$; and the number of internal nodes at level $m$ is $\lceil (n-D^m)/(D-1) \rceil$.

Now we have the following result.

\begin{thm}\label{thm:Huffman:uniform_dis_opt}
The tree $T_U$ is a Huffman tree for the uniform distribution $U_n$.
\end{thm}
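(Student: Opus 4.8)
The plan is to verify that $T_U$, with probabilities $1/n$ assigned to all leaves, satisfies the three Huffman tree properties (P1)--(P3) of Theorem~\ref{thm:sibling_property}; the result then follows immediately. I will treat the two cases $n = D^m$ and $D^m < n < D^{m+1}$ separately, though the first is handled already by Lemma~\ref{lemma:fat_tree} (the uniform distribution trivially satisfies $\sum_{i=n-D+1}^n p_i = D/n \geq 1/n = p_1$), so the real work is the non-power case.

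So assume $D^m < n < D^{m+1}$. First I would compute the induced node probabilities: every leaf has probability $1/n$; an internal node at level $m$ with $c$ children (where $c = D$ except possibly for the leftmost level-$m$ internal node, which has $c = m_0$ children with $2 \le m_0 \le D$ and $D-1 \mid n - m_0$) gets probability $c/n$; and by a short downward induction, every internal node at level $m-1$ or higher gets probability equal to (number of leaf-descendants)$/n$. The key numerical facts to extract are: (a) the minimum probability among level-$m$ nodes is $\min(1/n,\, m_0/n) = 1/n$ if there are leaves at level $m$ — but wait, one must check whether leaves at level $m$ exist; they do precisely when the number of internal nodes at level $m$ is less than $D^m$, which holds since $n < D^{m+1}$ forces $\lceil (n-D^m)/(D-1)\rceil < D^m$ — actually I should double check this bound, but it is a routine estimate; and (b) every level-$(m-1)$ node has probability at least $D/n$ (it has $D$ children, each a level-$m$ node of probability $\ge 1/n$, and in fact at least one child is internal with probability $\ge 2/n$, but $D/n$ suffices).

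With these in hand, the three properties are checked as follows. For the \level (P1): probabilities only take the values $1/n$ (level-$m+1$ leaves and level-$m$ leaves), $c/n$ for $2 \le c \le D$ (level-$m$ internal nodes), and larger values at levels $< m$; since a level-$m$ internal node has probability $\ge 2/n > 1/n \ge$ any level-$(m+1)$ probability, and level-$(m-1)$ nodes have probability $\ge D/n \ge c/n$ for any level-$m$ node, monotonicity across every pair of levels holds, and within a level it is immediate (constant at levels $> m$, and at level $m$ one just notes leaves $=1/n \le c/n =$ internal, consistent with "lower level node has probability $\geq$ higher level node" only requiring the cross-level comparison, not a within-level one). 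For the \maxsib (P2): the maximum level is $m+1$; the sibling set consisting of the children of the leftmost level-$m$ internal node is a sibling set of size $m_0$ at the maximum level, and since all leaf probabilities are equal to $1/n$ it trivially holds the lowest $m_0$ probabilities of $U_n$; every other sibling set has exactly $D$ nodes by construction (condition iii defining $T_U$). For the \sib (P3): at every level the probabilities are either all equal (levels $\ne m$, and even at level $m$ if there are no leaves, which does not occur here) or take the value $1/n$ on a left block of leaves followed by $c/n$ on internal nodes; in either case one can list the level-nodes left to right — siblings are automatically contiguous in a left-to-right listing of a level of a tree, and the probabilities are non-decreasing because the leftmost internal node (probability $m_0/n$, the smallest internal value) sits immediately after the leaves and the remaining internal nodes all have probability $D/n \ge m_0/n$; a tiny reordering within the internal block makes it genuinely non-decreasing while keeping siblings together.

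The main obstacle I anticipate is not conceptual but bookkeeping: one must be careful about edge cases — whether level-$m$ leaves actually exist (they do, as argued), whether the leftmost level-$m$ internal node might coincide with the only internal node at level $m$ (fine), and the case $D - 1 \mid n - 1$ where $m_0 = D$ so all internal nodes have $D$ children and level $m$ is a clean block of leaves then $D$-ary internal nodes. Also one should double-check the claim that $T_U$ as defined has exactly $n$ leaves and that the internal-node count formulas ($\lceil (n-D^m)/(D-1)\rceil$ etc.) are consistent with the Huffman algorithm's choice of $m_0$ — this is the arithmetic that ties condition (iii) of the definition of $T_U$ to property (P2). Once these finitely many cases are dispatched, applying Theorem~\ref{thm:sibling_property} closes the proof.
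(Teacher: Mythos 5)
Your overall strategy---verify (P1)--(P3) for $T_U$ with all leaf probabilities equal to $1/n$ and invoke Theorem~\ref{thm:sibling_property}---is exactly the paper's strategy, and your conclusion is reachable this way. But two points in your execution need repair. First, your claim that leaves at level $m$ always exist, i.e.\ that $n<D^{m+1}$ forces $\lceil (n-D^m)/(D-1)\rceil < D^m$, is false: take $D=3$, $m=1$, $n=8$, where $\lceil 5/2\rceil = 3 = D^m$, so every level-$1$ node of $T_U$ is internal (with $3+3+2$ children). You lean on this claim in your computation of the minimum level-$m$ probability and again in your (P3) discussion (``which does not occur here''). The theorem survives---when level $m$ has no leaves its probabilities are $m_0/n$ followed by $D/n$'s, and all three properties still check out---but as written your case analysis excludes a case that actually occurs. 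Second, your verification of (P1) is only carried out for the level pairs $(m,m+1)$ and $(m-1,m)$; for levels above $m-1$ you assert ``larger values at levels $<m$,'' which is not the required statement (one needs the minimum probability at level $j$ to dominate the maximum at level $j+1$ for every $j<m-1$ as well). This is fillable by a routine induction, but it is precisely the work you have not done.

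The paper avoids both issues with one move you did not make: after assigning $U_n$ to the leaves of $T_U$, it deletes the level-$(m+1)$ leaves and observes that the induced PMF $Q$ on the $D^m$ level-$m$ nodes has minimum probability at least $1/n$ and maximum at most $D/n$, so $Q$ satisfies the hypothesis of Lemma~\ref{lemma:fat_tree}; that lemma then certifies all the Huffman tree properties for the entire truncated tree $T_U(m)$ at once (its proof already contains the induction over the upper levels), and only the bottom level $m+1$ needs a separate check. Adopting that reduction would let you delete your level-by-level bookkeeping above level $m$ and would make the existence or non-existence of level-$m$ leaves irrelevant, since only the bounds $1/n \le q \le D/n$ on the induced probabilities matter.
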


\begin{proof}
Lemma \ref{lemma:fat_tree} tells us that the theorem is true when 
%$n=D^m$, for some $n\in \mathbb{N}$. 
$n$ is a power of $D$.
Let us take that $n$ satisfies $D^m < n< D^{m+1}$. Consider $T_U$ and assign the probabilities of $U_n$ to its leaves. Consider the tree $T_U(m)$ obtained by removing all the leaves from $T_U$ at level $m+1$. Let the nodes of $T_U(m)$ retain their probabilities as they were in $T_U$. Now look at the probabilities of the nodes of $T_U(m)$ at level $m$. The lowest probability is at least $1/n$ and the highest probability is at most $D/n$.
%Each of the lowest $D$ probabilities have value equal to $1/n$ and the maximum probability is less than or equal to  $D/n$. 
Thus, by Lemma \ref{lemma:fat_tree} and Theorem \ref{thm:sibling_property}, we have that $T_U(m)$ satisfies the Huffman tree properties. This, along with the way the tree $T_U$ is defined shows us that $T_U$ also satisfies the Huffman tree properties. Thus, 
%the tree 
$T_U$ is a Huffman tree for $U_n$.
\end{proof}
Thus, we have that when $n=D^m$, an optimal length sequence for $U_n$ is $(m,m,\ldots,m)$ and when $D^m<n<D^{m+1}$ an optimal length sequence for $U_n$ is $(m,m,\ldots,m,m+1,m+1,\ldots,m+1)$ with $D^m- \lceil (n-D^m)/(D-1)\rceil$ occurrences of $m$. Let us denote this optimal length sequence for $U_n$ by $\sfL_U$.

%it is known that the codeword lengths of the optimal code generated by the Huffman algorithm is as follows: i) when $n=D^m$, for some positive integer $m$, the codeword lengths are $(m,m,\ldots,m)$, and ii) when $D^m < n< D^{m+1} $, for some positive integer $m$, the codeword lengths are 
%$(m,m,\ldots,m,m+1,\ldots,m+1)$, where $m$ occurs $D^m-\lceil (n-D^m)/(D-1) \rceil$ number of times \cite{extra}.
%%of the form $(m,m,\ldots,m,m+1,\ldots,m+1)$ \cite{extra}. 
%Let us use $\sfL_U(n)$ or just $\sfL_U$ to denote this optimal length sequence for the uniform distribution $U_n$. 
%(with $\sfL_U$ parametrized with $n$)

It turns out that Hwang's argument, as in \cite{Hwang}, can now be used
%We can now use Hwang's argument, as in \cite{Hwang}, 
to determine an optimal length sequence for any point of maximum. 
%In this paper, 
 Let $P=(p_1,p_2,\ldots,p_n)$ and $Q=(q_1,q_2,\ldots,q_n)$ be two PMFs from $\P$. If the sum of the highest $k$ probabilities of $P$ is greater than or equal to the sum of the highest $k$ probabilities of $Q$, i.e. if $\sum_{i=1}^{k}p_i \geq \sum_{i=1}^{k}q_i$, for all $k$ satisfying $1\leq k \leq n$, then $P$ is said to \emph{majorize} $Q$
%, or $Q$ is said to be \emph{majorized} by $P$, and this 
and is denoted as $P\succ Q$. Note that this relation is a partial order. It is well-known that every $P\in \P$ majorizes the uniform distribution $U_n$.
%, i.e., it does not hold between every pair of PMFs from $\P$. The following result by Hwang \cite{Hwang} gives a way of directly comparing the values the minimum expected length function take at two PMFs if one majorizes the other.

\begin{lemma}\label{lemma:Hwang_modified}
If $P$ is a point of maximum, then the following hold:\\
i) $\sfL_U$ is an optimal length sequence for $P$. \\
ii) When $n$ is not a power of $D$, the PMF $P$ is the uniform distribution $U_n$, or, in other words, $U_n$ is the unique point of maximum.
\end{lemma}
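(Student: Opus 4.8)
The plan is to run Hwang's Schur-concavity argument with the explicit tree $T_U$ in place of an unspecified optimal tree, and then to read off (i) and (ii) from the \emph{equality} case of that argument. Throughout I write $l_1\le l_2\le\cdots\le l_n$ for the sorted entries of $\sfL_U$, and I use that $\hL$ is Schur-concave \cite{Hwang} and that every $P\in\P$ majorizes $U_n$, so that $\hL$ attains its maximum on $\P$ at $U_n$.

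For part (i): a point of maximum $P$ satisfies $\hL(P)=\hL(U_n)$, and by Theorem~\ref{thm:Huffman:uniform_dis_opt} this equals $\frac1n\sum_{i=1}^n l_i$, the expected length of $T_U$ carrying the probabilities of $U_n$. I would then assign the probabilities of $P$ to the leaves of $T_U$ with the probabilities in non-increasing order sitting at the leaves in non-decreasing order of depth, so the expected length becomes $\sum_i p_i l_i$. Writing $S_k=\sum_{i=1}^k p_i-\tfrac kn$ (so that $S_0=S_n=0$ and $S_k\ge0$, the latter because $P\succ U_n$), Abel summation gives
\begin{align*}
\sum_{i=1}^n p_i l_i-\frac1n\sum_{i=1}^n l_i
&=\sum_{i=1}^n\Bigl(p_i-\tfrac1n\Bigr)l_i\\
&=\sum_{k=1}^{n-1}S_k\,(l_k-l_{k+1})\le0,
\end{align*}
since each $S_k\ge0$ and $(l_k)$ is non-decreasing. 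Hence this tree has expected length $\le\hL(U_n)=\hL(P)$; as $\hL(P)$ is the minimum expected length of any tree for $P$, equality holds, the tree is optimal for $P$, and $\sfL_U$ is an optimal length sequence for $P$.

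For part (ii): assume for the moment that $\sfL_U$ is non-constant, so it has exactly one jump, $l_a<l_{a+1}$ with $1\le a\le n-1$. Since $P$ is a point of maximum the above inequality is an equality, so $S_k(l_k-l_{k+1})=0$ for every $k$; taking $k=a$ forces $S_a=0$, i.e.\ $\sum_{i=1}^a p_i=a/n$. Since $p_1\ge\cdots\ge p_n>0$ sum to $1$, the top $a$ probabilities average $1/n$, so $p_a\le1/n$, while the bottom $n-a$ average $1/n$, so $p_{a+1}\ge1/n$; together with $p_a\ge p_{a+1}$ this forces $p_a=p_{a+1}=1/n$, and then monotonicity forces $p_i=1/n$ for every $i$, that is, $P=U_n$.

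The step I expect to be the main obstacle is the assumption just used: that $\sfL_U$ is non-constant whenever $n$ is not a power of $D$, i.e.\ that $1\le a\le n-1$, where $a=D^m-\lceil(n-D^m)/(D-1)\rceil$ is the number of length-$m$ codewords in $\sfL_U$ and $D^m<n<D^{m+1}$. This can fail — one gets $a=0$, all codewords of length $m+1$ — when $D\ge3$ and $n$ is close to $D^{m+1}$, and there the Abel sum above is identically $0$ and yields no constraint on $P$. So the real work is to delimit precisely the $n$ for which part (ii) applies and, in the borderline $a=0$ regime, to replace the Abel-summation step by a direct analysis of the feasible short codes for $P$ — for instance, promoting the heaviest symbol one level up while demoting the lightest — to decide whether uniqueness still holds.
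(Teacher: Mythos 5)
Your proof of part (i), and of part (ii) in the case where $\sfL_U$ is non-constant, is correct and is essentially the paper's own argument: your Abel summation with $S_k=\sum_{i=1}^{k}p_i-k/n\geq 0$ is the same summation-by-parts rearrangement the paper performs in its chain of inequalities, and your deduction that $S_a=0$ at the unique jump index forces $p_a=p_{a+1}=1/n$ and hence $P=U_n$ matches the paper's treatment of its jump index $k$ and the index sets $I_1$, $I_2$.

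The obstacle you flag at the end is not merely a loose end in your write-up; it is a genuine gap in the paper's own proof, and in the regime you identify the statement of part (ii) is actually false. The paper tacitly assumes there is an index $k\geq 1$ at which the lengths of $\sfL_U$ jump from $m$ to $m+1$; when $D^{m}<n<D^{m+1}$ and $D^{m}-\lceil(n-D^{m})/(D-1)\rceil=0$ --- equivalently $D^{m+1}-D+2\leq n\leq D^{m+1}-1$, a range that is nonempty precisely when $D\geq 3$ --- every entry of $\sfL_U$ equals $m+1$, the equality $\frac{n-k}{n}=\sum_{j=k+1}^{n}p_j$ degenerates to $1=1$, the set $I_1$ is empty, and no contradiction is reached. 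Moreover, uniqueness genuinely fails there. Take $D=3$ and $n=8$, so that $\sfL_U=(2,2,\ldots,2)$ and $\hL(U_8)=2$, and let $P=(0.13,0.13,0.13,0.13,0.12,0.12,0.12,0.12)$. Any Kraft-admissible length sequence with all lengths at least $2$ has expected length at least $2$; the only candidates that could beat $2$ contain a length-$1$ codeword, and the best of these are $(1,2,2,2,2,2,3,3)$, with expected length $2-p_1+p_7+p_8=2.11$, and $(1,1,2,3,3,3,3,3)$, with expected length $2-p_1-p_2+(p_4+\cdots+p_8)=2.35$. Hence $\hL(P)=2=\hL(U_8)$, and this non-uniform $P$ is a point of maximum; running the Huffman algorithm on $P$ confirms this, since the merges create internal probabilities $0.24$, $0.37$, $0.39$, $1$, which sum to $2$. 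So your proposed ``real work'' is exactly the right diagnosis: part (ii) as stated holds only when $\sfL_U$ is non-constant, and for the exceptional values of $n$ the set of points of maximum is strictly larger than $\{U_n\}$ and must be characterized separately by inequalities of the kind $p_7+p_8\geq p_1$ above; this also invalidates part (ii) of the paper's final theorem for those $n$.
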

\begin{proof}
Let 
%us take 
$P=(p_1,p_2,\ldots,p_n)$ 
%(recall our convention that the probabilities are written out in a non-increasing order) 
and let $\sfL_{P}=(\le{p}_1,\le{p}_2,\ldots,\le{p}_n)$ be an optimal length sequence for $P$. From our ordering convention, we have that $\le{p}_i$ is the length of the codeword associated with $p_i$. 
%Thus, we have $\le{p}_1\leq \le{p}_2 \leq \ldots \leq \le{p}_n$. 
Let us also consider the optimal length sequence $\sfL_U$ for $U_n$ and write it out as $(\le{u}_1,\le{u}_2,\ldots,\le{u}_n)$. 
%with $\le{u}_1\leq \le{u}_2 \leq \ldots \leq \le{u}_n$. 
Take $\le{u}_{0}=\le{p}_{0}=0$.
Making use of Hwang's technique,
%, as occurring in \cite{Hwang}, 
we get
% Now we have 
 the following chain of inequalities:
   \begin{align}
     \hL(U_n) &= \sum_{i=1}^{n} \le{u}_{i}/n,\notag \\
        &= \sum_{i=1}^{n} (\le{u}_{i}-\le{u}_{i-1}) \sum_{j=i}^{n}\frac{1}{n}, \label{peq:subeq_mod_Hwang0} \\
        &\geq  \sum_{i=1}^{n} (\le{u}_{i}-\le{u}_{i-1}) \sum_{j=i}^{n}p_{j}, \label{peq:subeq_mod_Hwang1} \\
        &=  \sum_{i=1}^{n} \le{u}_{i} p_{i},\notag\\
        &\geq \sum_{i=1}^{n} \le{p}_{i} p_{i} , \label{peq:subeq_mod_Hwang2}\\
        &= \hL(P),\notag
    \end{align}
  where (\ref{peq:subeq_mod_Hwang1}) follows from the relation $P\succ U_n$, and (\ref{peq:subeq_mod_Hwang2}) follows from the fact that $\sfL_P$ is an optimal length sequence for $P$. Since $\hL(U_n)= \hL(P) $, we have that the inequalities in (\ref{peq:subeq_mod_Hwang1}) and in (\ref{peq:subeq_mod_Hwang2}) are equalities. Let us see what they imply.
  
 i) Since the inequality in (\ref{peq:subeq_mod_Hwang2}) is actually an equality, we have that $\sfL_U$ is an optimal length sequence for $P$.
  
  ii) Let $n$ be such that $D^m <n< D^{m+1}$. Since the entries in the sequence $\sfL_U$ are either $m$ and $m+1$, the expression in (\ref{peq:subeq_mod_Hwang0}) boils down to $m\sum_{j=1}^{n}1/n+ \sum_{j=k+1}^{n} 1/n$, for $k$ such that $\le{u}_k=m$ and $\le{u}_{k+1}=m+1$. Similarly, the expression in (\ref{peq:subeq_mod_Hwang1}) is $m\sum_{j=1}^{n}p_j+ \sum_{j=k+1}^{n} p_j$. Since the inequality in (\ref{peq:subeq_mod_Hwang1}) is an equality, we have that 
   \begin{equation}
   \frac{n-k}{n}= \sum_{j=k+1}^{n} p_j. \label{peq:pom_avg}
   \end{equation} 
    Let $I_1=\{1,2,\ldots,k\}$ and $I_2=\{k+1,k+2,\ldots,n\}$. We have that $p_{k+1}$ is greater than or equal to the average of $\{p_j\}_{j\in I_2}$, which equals $1/n$ (from \ref{peq:pom_avg}). Thus, $p_{k+1}\geq 1/n$. From  (\ref{peq:pom_avg}), we also have that

\begin{equation}
\sum_{j\in I_1}p_j=k/n. \label{leq:I1_avg}
\end{equation}

For $j\in I_1$, since $p_j\geq p_{k+1}\geq 1/n$, (\ref{leq:I1_avg}) can occur iff 
    \begin{equation}
p_j=1/n,\text{  for all }j\in I_1. \label{peq:pom1}
    \end{equation}
    
    Now, if there exists a $j\in I_2$ such that $p_j<1/n$, then  (\ref{peq:pom_avg}) implies that $p_{k+1}>1/n$. But this contradicts (\ref{peq:pom1}). Thus, $P$ is the uniform distribution.
    Hence, when $n$ is not a power of $D$, uniform distribution $U_n$ is the unique point of maximum.
\end{proof}

Lemma \ref{lemma:Hwang_modified} tells that when $n$ is not a power of $D$, the uniform distribution is the unique point of maximum. It also tells that when 
$n$ is a power of $D$, 
%we have it from Lemma \ref{lemma:Hwang_modified} that 
the set of all the points of maximum is precisely the collection of
all the PMFs that have $\sfL_U$ as their optimal length sequence. Is this set the same as the set of all PMFs whose Huffman code's length sequence is $\sfL_U$? 
The answer to this question is not clear at this stage
%At this stage, the answer is not clear 
because the Huffman algorithm does not generate all the optimal codes. The following result will come to our aid:

\begin{lemma}\label{lemma:optimal_Huffman}
 {If $\sfL$ is an optimal length sequence for a $P\in \P$, then} there exists a Huffman tree for $P$ with $\sfL$ as its length sequence.
\end{lemma}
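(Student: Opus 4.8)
The plan is to start from an optimal code for $P$ with length sequence $\sfL$ and transform it, without changing the multiset of codeword lengths (hence without changing expected length or optimality), into one whose code tree satisfies the three Huffman tree properties of Theorem~\ref{thm:sibling_property}; by that theorem the resulting tree is then a Huffman tree for $P$. So the real content is: \emph{among all trees realizing the optimal length sequence $\sfL$ for $P$, at least one satisfies (P1)--(P3).}

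First I would fix an optimal tree $T$ for $P$ with length sequence $\sfL$. Since $\sfL$ is optimal, Remark~\ref{rem:optimal} already gives that (P1), the \level, holds for $T$. Next I would argue that we may rearrange which leaf carries which probability: relabel the leaves of $T$ so that, reading leaves in order of non-increasing level (ties broken arbitrarily), the associated probabilities are non-increasing. This is a sorting/exchange argument — swapping the probabilities on two leaves at levels $a \le b$ so that the deeper leaf gets the smaller probability never increases $\sum_i p_i \le{}_i$, so optimality is preserved, and the length sequence is literally unchanged. After this step the node probabilities (each internal node getting the sum of its children's probabilities) are consistent with (P1) restated in terms of $P$'s ordering, and we have a natural handle on which leaves sit at the maximum level.

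The main work is (P2) and (P3). For (P2): the code is full/complete (an optimal tree has no internal node with a single child, and in the $D$-ary case one can always fill out so that every internal node except possibly one at the deepest level has exactly $D$ children — this is forced once the length sequence equals $\sfL_U$, but I would give the general argument from optimality plus the Kraft equality), so all sibling sets have $D$ nodes except one sibling set at the maximum level with $m$ nodes where $D-1 \mid n-m$, $2 \le m \le D$. I then need that \emph{this} $m$-node sibling set can be taken to hold the $m$ smallest probabilities of $P$; this again follows by an exchange argument using (P1) — the $m$ smallest probabilities live on leaves at the maximum level, and permuting leaves within a level (carrying their subtrees, which for leaves are trivial) costs nothing. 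For (P3): I would prove it by the same kind of bubble-sort exchange, done level by level from the bottom up, showing that whenever two siblings' probabilities are out of order, or two adjacent blocks (sibling sets) are out of order, swapping the offending leaves or swapping the two whole subtrees preserves the length sequence and optimality while reducing a suitable inversion count; the process terminates at a tree satisfying (P3).

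The step I expect to be the main obstacle is making the simultaneous satisfaction of (P2) and (P3) airtight: the exchanges used to put the $m$ smallest probabilities into the short sibling set must not destroy the non-decreasing-with-contiguous-siblings ordering at other levels, and the exchanges used for (P3) must respect the constraint that internal-node probabilities are determined by their subtrees, so swapping leaves can perturb probabilities higher up. The clean way around this is to do everything in a single bottom-up pass: first normalize the shape (all sibling sets size $D$ except one of size $m$ at the deepest level), then at each level, going from the deepest up, sort the nodes at that level by probability using sibling-respecting swaps of entire subtrees, which leaves all lower levels already sorted untouched and only relabels nodes at the current level. One then checks that the very first (deepest) level's sorting automatically puts the $m$ smallest probabilities into the unique size-$m$ sibling set, giving (P2), and the pass as a whole gives (P3); (P1) was inherited from optimality at the start and is preserved because no length changes. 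Invoking Theorem~\ref{thm:sibling_property} then finishes the proof.
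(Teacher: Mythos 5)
Your proposal is correct and follows essentially the same route as the paper: start from an optimal tree realizing $\sfL$, get (P1) from Remark~\ref{rem:optimal}, then perform a single bottom-up pass reassigning children/subtrees level by level to enforce (P2) at the deepest level and (P3) everywhere (noting that reassignments at a level do not disturb the sibling structure already fixed below), and finally invoke Theorem~\ref{thm:sibling_property}. The concerns you flag (preserving the length sequence, not creating new leaves one level up, and not undoing lower levels) are exactly the points the paper's proof addresses, and your bottom-up resolution is the one it uses.
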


\begin{proof}
 Let $T$ be an optimal tree for $P$ with $\sfL$ as its length sequence.
 It should at least satisfy the \level (P1) (Remark \ref{rem:optimal}).
 
%  Rearrange the nodes along with their probabilities (or, change the sibling pairs of nodes along with their probabilities) at this level so that (P2) is satisfied for the nodes of this level. 
% Redraw the edges between the nodes at level $l_{\max}$ and their parents in such a way 
Let $l_{\max}$ be the maximum level of a node of $T$.
Change, if necessary, the assignment of children at level $l_{\max}$ to the internal nodes at level $l_{\max}-1$ so that the resulting tree is still a subtree of $T_{\infty}$, but now the \maxsib (P2) and the \sib (P3) are satisfied by the nodes at level $l_{\max}$. This will not create any new leaf nodes at level $l_{\max}-1$ of the resulting tree, call it $T'$, for otherwise one of the leaves at level $l_{\max}$ can be deleted and its probability can be assigned to one of the newly created leaves at level $l_{\max}-1$. If we now throw away the other leaves, if present, without any assignment of probabilities, then the expected length of this new tree is strictly less than that of $T$ which is not possible. Further, we have that $T'$ is optimal.
% The expected codeword length of the resulting code is still $\L$.
 
  Assume that we now have a tree
  \begin{itemize}
  \item[i)] which is optimal, and
  \item[ii)]  in which the \sib (P3) is true for the nodes at levels $l_{\max}-k$ to $l_{\max}$, for some $k$ satisfying $0 \leq k < l_{\max}-1$.
  \end{itemize}
  
%  i) which is optimal, and\\ 
%  ii) in which the property (P3) is true for the nodes at levels $l_{\max}-k$ to $l_{\max}$, for some $k$ satisfying $0 \leq k < l_{\max}-1$.\\
%  Now  rearrange the nodes, along with their probabilities and the subtrees under them, at level $l_{\max}-k-1$ so that the nodes of this level satisfy
% Now redraw the edges between the nodes at level $l_{\max}-k-1$ and their parents in such a way that the tree still remains a binary tree and the condition (P2) is satisfied by the nodes at level $l_{\max}-k-1$. 
Change, if necessary, the assignment of children at level $l_{\max}-k-1$ to the internal nodes at level $l_{\max}-k-2$ so that the resulting tree is still $D$-ary, but now  the \sib (P3) is satisfied by the nodes at level $l_{\max}-k-1$. A tree obtained after this re-assignment is clearly optimal. Moreover, the nodes at levels $l_{\max}-k$ to $l_{\max}$ still satisfy the \sib (P3)
    as the nodes at these levels which were siblings before the rearrangement remain so after it. This process can be continued till we get a tree which is optimal and satisfies (P2) and (P3). This tree is a Huffman tree for $P$ (Remark \ref{rem:optimal} and Theorem \ref{thm:sibling_property}) and has $\sfL$ as its length sequence.
 \end{proof}
 
 Even though the Huffman algorithm is restrictive in the range of optimal trees it constructs, Lemma \ref{lemma:optimal_Huffman} assures us that the algorithm can yield an optimal tree corresponding to any optimal length sequence.
 
 Thus, for the case of $n=D^m$, any point of maximum has a Huffman tree with $\sfL_U$ as its length sequence. From Lemmas \ref{lemma:fat_tree} and \ref{lemma:Hwang_modified}, the following theorem follows:
  
  \begin{thm}\label{thm:Huffman_main}
  i) When $n$ is a power of $D$, the points of maximum for the minimum expected length function $\hL$ are those $P=\p1p$ from the set $\P$
  for which the sum of the lowest $D$ probabilities is greater than or equal to its maximum probability, i.e. 
 %  that satisfy
   $$\sum_{i=n-D+1}^{n}p_i \geq p_1.$$\\
   ii) When $n$ is not a power of $D$, the uniform distribution $U_n$
  is the unique point of maximum for $\hL$.  
  \end{thm}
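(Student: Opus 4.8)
The plan is to obtain the theorem by stitching together the three lemmas already in hand, so very little genuinely new work is needed. Part~(ii) requires nothing beyond a pointer: Lemma~\ref{lemma:Hwang_modified}(ii) already asserts that when $n$ is not a power of $D$, the uniform distribution $U_n$ is the unique point of maximum.

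For part~(i) I would treat the two implications separately. For the ``if'' direction, suppose $P \in \P$ with $\sum_{i=n-D+1}^{n} p_i \ge p_1$ and $n = D^m$. Lemma~\ref{lemma:fat_tree} then says $T_U(m)$ is a Huffman tree for $P$, and since all leaves of $T_U(m)$ sit at level $m$ its expected length is $m$, so $\hL(P) = m$. The uniform distribution $U_n$ also satisfies $\sum_{i=n-D+1}^{n}(1/n) = D/n \ge 1/n$, so the same reasoning gives $\hL(U_n) = m$; as $U_n$ is a point of maximum, $\hL(P) = \hL(U_n)$ is the maximum value of $\hL$ and hence $P$ is a point of maximum. (This is exactly the observation recorded in the remark following Lemma~\ref{lemma:fat_tree}.)

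For the ``only if'' direction, suppose $P = \p1p$ is a point of maximum and $n = D^m$. By Lemma~\ref{lemma:Hwang_modified}(i), the sequence $\sfL_U$ is an optimal length sequence for $P$; for $n = D^m$ this sequence is $(m, m, \ldots, m)$. By Lemma~\ref{lemma:optimal_Huffman} there is a Huffman tree $T$ for $P$ whose length sequence is $(m,\ldots,m)$, i.e.\ every one of its $n = D^m$ leaves lies at level $m$. But $T_\infty$ has exactly $D^m$ nodes at level $m$, so $T$ must contain all of them, together with all nodes on the paths from the root to them; hence $T$ is precisely $T_U(m)$. Therefore $T_U(m)$ is a Huffman tree for $P$, and Lemma~\ref{lemma:fat_tree} then yields $\sum_{i=n-D+1}^{n} p_i \ge p_1$. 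This establishes the characterization in part~(i).

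I do not expect a real obstacle in this argument --- all the substance lives in the preceding lemmas, especially in the Hwang-style chain of inequalities behind Lemma~\ref{lemma:Hwang_modified} and in Lemma~\ref{lemma:optimal_Huffman}. The one place that calls for a moment's care is the identification, in the ``only if'' direction, of a Huffman tree with length sequence $(m,\ldots,m)$ as $T_U(m)$; but since such a tree has all $D^m$ leaves on the single level $m$ and $T_\infty$ offers no more than $D^m$ nodes on that level, the tree is forced, and the isomorphism freedom inherent in the notion of a Huffman tree is immaterial here.
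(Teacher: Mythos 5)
Your proposal is correct and follows essentially the same route as the paper: part (ii) is Lemma~\ref{lemma:Hwang_modified}(ii) verbatim, the ``if'' direction of part (i) is the remark after Lemma~\ref{lemma:fat_tree}, and the ``only if'' direction chains Lemma~\ref{lemma:Hwang_modified}(i) and Lemma~\ref{lemma:optimal_Huffman} to produce a Huffman tree with length sequence $(m,\ldots,m)$, identifies it as $T_U(m)$, and invokes Lemma~\ref{lemma:fat_tree}. The only detail you spell out beyond the paper's one-line derivation is the forced identification of that tree with $T_U(m)$, which is a correct and worthwhile clarification.
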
 

\section{Concluding Remarks}
The points of maximum of the minimum expected length function have been completely characterized using a characterization of Huffman trees, and a chain of inequalities that Hwang used to show 
the Schur-concavity of the minimum expected length function. This result shows that the points of maximum known in the literature are all that exist.

%that the minimum expected length  is Schur-concave.

\end{document}